\newtheorem{mydef}{Definition}
\newtheorem{thm}{Theorem}
\newtheorem{lem}{Lemma}
\newtheorem{alg}{Algorithm}
\newtheorem{cor}{Corollary}
\title{
A Receding Horizon Strategy for Systems\\
with Interval-Wise Energy Constraints
}
\author{Eduardo Arvelo and Nuno C. Martins}
\date{}
\begin{document}

\begin{abstract}
\boldmath
We propose a receding horizon control strategy that readily handles systems that exhibit interval-wise total energy constraints on the input control sequence. The approach is based on a variable optimization horizon length and contractive final state constraint sets. The optimization horizon, which recedes by $N$ steps every $N$ steps, is the key to accommodate the interval-wise total energy constraints. The varying optimization horizon along with the contractive constraints are used to achieve analytic asymptotic stability of the system under the proposed scheme. The strategy is demonstrated by simulation examples.
\end{abstract}

\maketitle

\let\thefootnote\relax\footnotetext{This work is partially funded by ONR AppEl Center and the Multiscale Systems Center, one of six research centers funded under the Focus Center Research Program.}

\section{INTRODUCTION}

We are interested in designing a receding horizon control (RHC) algorithm that is able to readily accommodate interval-wise total energy constraints (ITEC). An ITEC system has the property that the total energy of the control input during a (pre-established) periodic time interval is limited (See Fig. \ref{description}). Examples of systems subject to ITEC include solar-powered systems that operate on battery power when sunlight is not available (e.g., at night). One can think of solar houses and solar-panel equipped mobile agents performing surveillance tasks for an extended period of time. Other examples of systems subject to ITEC can also be found in avionics and include temperature control of the aircraft cabin, for which power allocation may be limited at certain times, such as during take-off and landing.

The idea behind RHC is simple: the controller solves a finite-horizon control problem, and the first element of the computed control sequence is applied to the system. At the next time step (or sampling time for continuous-time systems), the procedure is repeated. The RHC strategy is very appealing because in most cases it allows for online tractable computation of the control input that does not suffer from the deficiencies of a pure open-loop strategy. The main drawback is that the direct implementation of RHC can be disastrous because stability is not (immediately) guaranteed.

Receding horizon control (or equivalently, model predictive control) dates back to the 1960s, but not until the late 1980s  was the issue of stability dealt rigorously \cite{Mayne:2000p442}. Until then, industry proponents of RHC had to manually ``fine tune" parameters to achieve a seemingly stable system. The first modification of RHC that guaranteed stability was the introduction of a zero final state constraint \cite{Mayne:1990p197},\cite{Keerthi:1988p725}, which in many cases renders the finite optimization problem unfeasible (or at the very least, it places a heavy burden on the controller). Later relaxations of the zero final state constraint included terminal cost functions (that penalize large final states) and terminal invariant constraint sets \cite{Chisci:1996p1035}, \cite{Scokeart:1999p1036}, \cite{Nicolao:2000p1307}. For a detailed survey of RHC, see \cite{Mayne:2000p442}.

Any of the traditional implementations of RHC (with step wise receding horizon) cannot readily accommodate the ITEC, as we explain in the next section. To remedy this problem, we implement a strategy that utilizes an initial horizon length of $2N$ that recedes by $N$ steps every $N$ steps (where $N$ is the length for which the ITEC is active) and does not recede at any other time instances. The basic idea is for the optimization horizon to encompass the ITEC in its entirety at all times. To guarantee stability of the system under this strategy, we impose contractive final state constraints during the optimization intervals for which the ITEC is not imposed, along with the assumption that the system is $\beta$-stabilizable (to be defined in section \ref{secbeta}).

The idea of contractive constraints was first suggested by \cite{Yang:1993p694} and has been later used in other works, such as \cite{Oliveira:1994p553} and \cite{Kothare:2000p452}. These papers also make use of a varying horizon strategy, which is also employed in \cite{Michalska:1993p201}, where a dual-mode controller is employed that steers the system to an invariant set by a horizon length that is computed online, and once the state lies in the invariant set the controller changes mode to drive it to zero. In \cite{Kothare:2000p452}, two horizons are employed: a shorter horizon used in the computation of the control sequence, and a longer horizon in which the predicted final state (propagated with constant input over the longer horizon) needs to satisfy the contractive constraint. The paper that is most similar to this one is \cite{Oliveira:1994p553}, where a smaller, non-receding contraction horizon is employed. However, the prediction horizon moves at every step and is not suitable for problems with systems subject to ITEC. Another paper worth to mention is \cite{Kothare:2003p686}, where a time-varying terminal constraint set is employed. Though the terminal constraint is not (necessarily) contractive and the authors employ a fixed horizon length, we feel this work is relevant to our research due to the time-varying nature of the terminal constraint set.

The paper is organized as follows: in section \ref{secprob}, we give the problem statement and discuss the shortcomings of the traditional RHC strategy applied to systems subject to ITEC. In section \ref{secbeta}, we introduce the idea of $\beta$-stabilizability. In section \ref{secvhc}, we describe the new RHC strategy, and provide proof of the stability of the system under the proposed algorithm. In section \ref{secsim}, we give a concrete example along with simulation results that showcases the new RHC strategy. In section \ref{seccon}, we conclude. Finally, in the appendix we provide the proof to a lemma used to prove our main result.

\begin{figure}[tb]
      \centering
      \includegraphics[width=12cm]{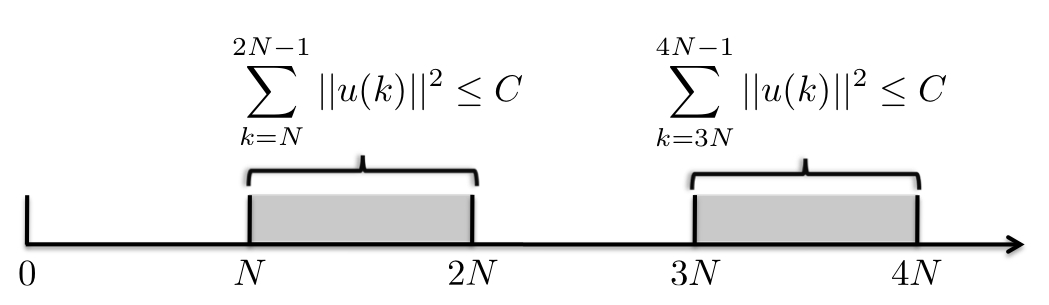}
      \caption{ITEC systems: Total energy constraints are enforced only at certain intervals}
      \label{description}
\end{figure}


\section{Problem Statement}\label{secprob}
Suppose we have a system with dynamics given by:
\begin{equation}\label{dynamics}
x(k+1)= \phi(x(k), u(k)),
\end{equation}
\noindent where $x(k) \in \mathbb{R}^n$ and $u(k) \in \mathbb{R}^m$  are the state  and the input of the system at time $k$, respectively; and $\phi: \mathbb{R}^n \times \mathbb{R}^m\rightarrow \mathbb{R}^n$, with $\phi(0,0)=0$. Moreover, the system is constrained by:
\begin{align}
u(k) &\in \mathcal{U},~~\forall k \label{c2} \\
\displaystyle \sum_{k=(2p-1)N}^{(2p)N-1}||u(k)||^2& \leq C, ~~p \in \mathbb{Z}_+, \label{c3}
\end{align}

\noindent where $\mathcal{U}$ is the input constraint set. Constraint (\ref{c3}) is known as the interval-wise total energy constraint (ITEC). In a real world application, this would be imposed during time intervals in which the system needs to rely solely on battery power. During other time intervals the system is powered by other energy sources (such as solar panels) and the total energy constraint is not imposed (see Fig. \ref{description}).

The infinite horizon control problem is to find the control sequence that minimizes the quadratic cost:
\begin{equation}
\begin{array}{ll}
\mathrm{minimize}~~&\displaystyle\sum_{k=1}^{\infty} ||x(k)||^2+||u(k-1)||^2\\
\mathrm{subject~to} & \mathrm{(\ref{dynamics}), (\ref{c2})~and~(\ref{c3}).}
\end{array}
\end{equation}
 
One way of generating control inputs in a tractable-online fashion is to employ a RHC strategy. The traditional implementation of RHC, however, causes a dilemma: how much energy should the controller allocate to the interval that partially covers the total energy constraint interval? One approach is to proportionally allocate the energy constraints. For example, with a horizon $N=3$, the traditional receding horizon approach would perform a finite optimization at $k=1$ (see Fig. \ref{rhc}) given by:
\begin{equation*}
\begin{array}{ll}
\mathrm{minimize}~~&\displaystyle\sum_{k=2}^{4} ||x(k)||^2+||u(k-1)||^2\\
\mathrm{subject~to} & \mathrm{(\ref{dynamics}), (\ref{c2})}\\
& ||u(3)||^2 \leq C/3, \label{c3_rhc1}
\end{array}
\end{equation*}
\noindent and at time $k=2$, the total energy constraint would become:
 \begin{equation*}
||u(3)||^2+||u(4)||^2 \leq 2C/3.
\end{equation*}
In this paper we propose a modified receding horizon strategy that not only provides analytical stability guarantees, but also does not require a heuristic solution to the ITEC allocation problem. It is important to note that the overlap of the constraint interval at the beginning of the optimization interval is no cause for concern. For example, at time $k=4$ in Fig. \ref{rhc} the amount of energy spent by $u(3)$ cannot be changed, say $||u(3)||^2=\gamma_3$. Then, the optimization will be carried out with the total energy constraint equal to:  
 \begin{equation*}
||u(4)||^2+||u(5)||^2 \leq C-\gamma_3.
\end{equation*}
\begin{figure}[tb]
      \centering
      \includegraphics[width=12cm]{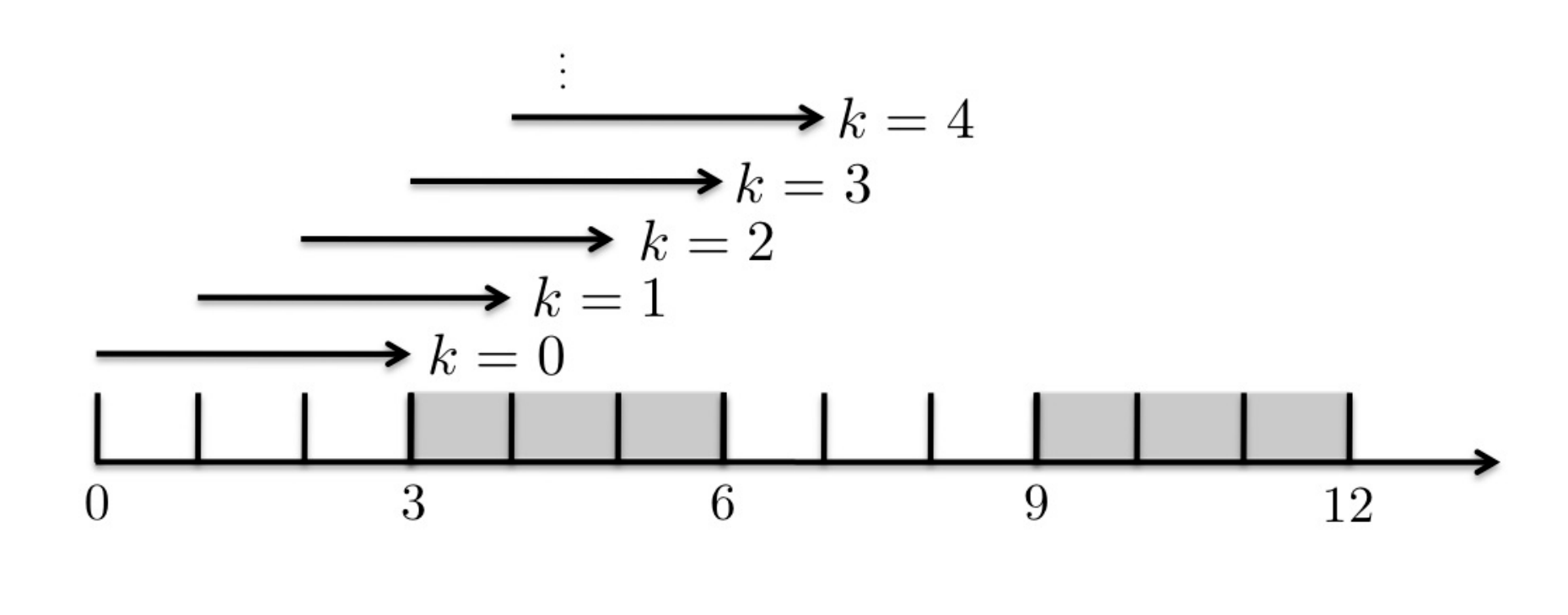}
      \caption{Traditional RHC with $N=3$. Note that during times $k=1$ and $k=2$, the total energy constraint interval is partially covered by the optimization horizon}
      \label{rhc}
\end{figure}

 \section{$\beta$-STABILIZABILITY}\label{secbeta}

The system is said to be $\beta$-stabilizable with $N$ steps if there exists an admissible control sequence that reduces the size of the state by a factor of $\beta$ in $N$ steps. In this paper, we do not require the contraction to occur over the optimization interval where the ITEC is active. A formal definition of $\beta$-stabilizability with ITEC is given below:

\begin{mydef} We say that a system $\phi$ is {$\beta$-stabilizable} with ITEC parameter $C$ and horizon $N$ in an open ball $B_N$ centered around the origin if there exists $\sigma>0$ such that:
\begin{align}
\mathcal{U}_N(x, \sigma, \beta) &\not= 0\\
\mathcal{U}^C_N(x, \sigma) &\not= 0
\end{align}
where $x \in B_N$, and $\mathcal{U}_N(x(k), \sigma, \beta)$, with $0 \leq \beta < 1$, is the set of all finite control sequences of length $N$ such that:
\begin{align}
\sum_{j=k+1}^{k+N} ||x(j)||^2+||u(j-1)||^2 & \leq \sigma ||x(k)||^2,\label{boundedcost}\\
u(k) &\in \mathcal{U} \label{boundedinput}\\
||x(k+N)||^2 &\leq \beta ||x(k)||^2
\end{align}
and $\mathcal{U}^C_N(x(k), \sigma)$ is the set of all finite control sequences of length $N$ such that (\ref{boundedcost}) and (\ref{boundedinput}) hold and 
\begin{align}
||x(k+N)||^2 &\leq ||x(k)||^2 \label{itecbound}\\
\displaystyle \sum_{j=k}^{k+N-1}||u(k)||^2& \leq C, ~~p \in \mathbb{Z}_+, \label{itec}
\end{align}
\end{mydef}\

Note that (\ref{itecbound}) does not impose contraction of the state norm if we constrain the control sequence to respect (\ref{itec}). We only require the state not to grow. Moreover, if the system has zero-input dynamics given by $\phi(x(k), 0) = x(k)$, then (\ref{itecbound}) is immediately satisfied.

 \section{INTERVAL-WISE HORIZON CONTROL} \label{secvhc}
 
In this section, we present an interval-wise receding horizon strategy (IRHC) that uses varying optimization horizons and contractive constraints. The varying horizon will not only enable the construction of an appropriate strategy for the problem with ITEC, but also guarantee asymptotic infinite horizon stability of the system with the imposition contractive constraints and the assumption that the system is $\beta$-stabilizable.

In order to accommodate for the total energy constraint, we employ a receding horizon strategy that entirely encompasses the constraint interval at all times (see Fig. \ref{vhc}). To accomplish this, the horizon is initially set for $h=2N$. Traditional RHC is performed, except that the horizon does ``recede" (i.e., $h = h-1$). When the horizon reaches $h=N+1$, it is expanded again to $h=2N$.
\begin{figure}[tb]
      \centering
      \includegraphics[width=12cm]{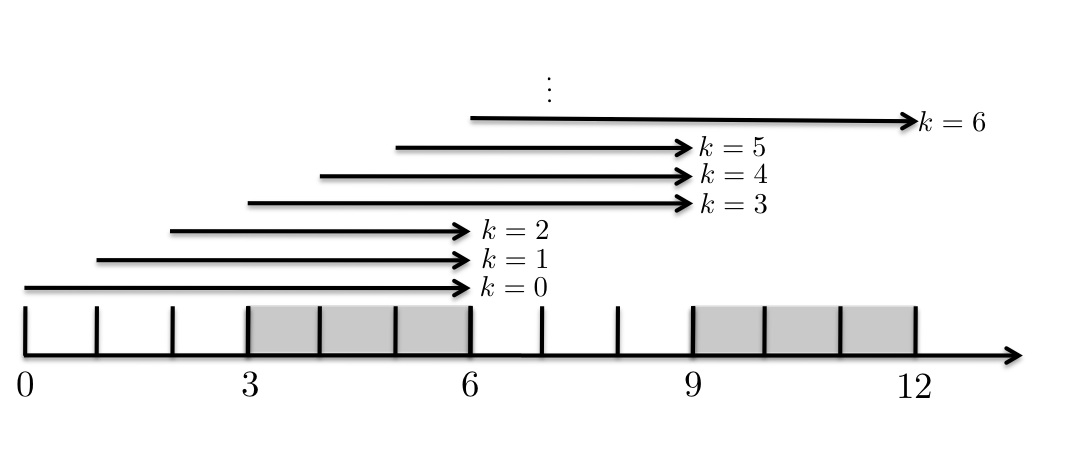}
      \caption{Interval-wise RHC with parameter $N=3$ based on algorithm \ref{algo1}}
      \label{vhc}
\end{figure}

We implement a receding horizon strategy in the following way:

\begin{alg}\label{algo1} {\bf IRHC with contractive constraints}
\begin{enumerate}
\item Set the optimization horizon $h = 2N$, time index $k=0$, energy-used tracking parameter $\gamma = 0$ and auxiliary indices $i=0$, $f=0$ and $p=N$.
\item Read in $x(k)$ and solve the constrained optimization problem:
\begin{equation}
\begin{array}{ll}
\mathrm{minimize}~~&\displaystyle\sum_{j=k+1}^{k+h} ||x(j)||^2+||u(j-1)||^2\\
\mathrm{subject~to} & \mathrm{(\ref{dynamics}), (\ref{c2})}\\
& \displaystyle \sum_{j=\max\{p,k\}}^{p+N-1}||u(j)||^2 \leq C-\gamma,\\
& ||x(k+h_c)||^2 \leq \beta^{(i+1)} ||x(0)||^2
\end{array}
\end{equation}
and apply the first element in the resulting control sequence to the system, say $u_1$.
\item If $f = 0$ (this means the ITEC sits at the beginning of the optimization horizon), set $\gamma=\gamma+||u_1||^2$
\item Reduce horizon $h = h-1$. 
\item If $h=N$, set $h=2N$ (this will push the horizon $N$ steps ahead) and $\gamma=0$ . If $k = p-1$, set $i=i+1$ (this will contract the constraint at the end of the new horizon), $f =1$; else if $k=p+N-1$, set $p=p+2N$ and $f =0$.
\item Set $k=k+1$ and go back to step 2.
\end{enumerate}
\end{alg}\

We are now ready to state our main result:

\begin{thm}\label{thm1} If the ITEC system $\phi$ is $\beta$-stabilizable with horizon $N$, then the interval-wise receding horizon strategy described in Algorithm \ref{algo1} will drive it asymptotically stable for any initial condition in $x(0) \in B_N$. Furthermore, the resulting dynamics will satisfy:
\begin{equation}
\sum_{j=1}^{\infty} ||x(j)||^2+||u(j-1)||^2 \leq \frac{1+\beta}{1-\beta} \sigma||x(0)||^2.
\end{equation}
\end{thm}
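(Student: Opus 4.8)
\textbf{Recursive feasibility.} The plan is to run a Lyapunov-type argument on $J^\star_k$, the optimal value of the finite-horizon subproblem solved at step~$k$, using the three consequences of $\beta$-stabilizability: recursive feasibility, the geometric decay forced by the contractive terminal constraints, and the per-window cost bound~(\ref{boundedcost}). I would first verify that step~2 of Algorithm~\ref{algo1} is feasible for every $k$ and that $\|x(\cdot)\|$ does not grow across the $N$-block boundaries, so that every realized state stays in $B_N$ and $\beta$-stabilizability keeps applying. Whenever the horizon has just been reset to $2N$, an admissible input over the two covered $N$-blocks is produced by splicing an element of $\mathcal{U}_N(\cdot,\sigma,\beta)$ over the ITEC-free block with an element of $\mathcal{U}^C_N(\cdot,\sigma)$ over the ITEC-active block: nonemptiness of these sets is $\beta$-stabilizability; the contraction $\|x(\cdot+N)\|^2\le\beta\|x(\cdot)\|^2$ together with~(\ref{itecbound}) satisfies the contractive terminal constraint; and~(\ref{itec}) satisfies the budget $\sum\|u(j)\|^2\le C-\gamma$ (with $\gamma=0$ just after a reset). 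Within a block, feasibility propagates by the usual receding-horizon shift: deleting the first element of the optimal sequence at $k$ leaves a sequence of exactly the length demanded at $k+1$ (the horizon having shrunk by one), the residual budget $C-\gamma$ has tightened by precisely the energy just committed, and the time at which the contractive inequality is imposed is unchanged.

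\textbf{Geometric decay and cost telescoping.} Feasibility forces the contractive constraints to hold along the closed loop; tracking the index $i$ in Algorithm~\ref{algo1} gives $\|x((2m+1)N)\|^2\le\beta^{m+1}\|x(0)\|^2$ at the end of the $(m{+}1)$-st ITEC-free block, and combining this with~(\ref{itecbound}) over the interleaved ITEC-active blocks gives $\|x(2mN)\|^2\le\beta^{m}\|x(0)\|^2$ for all $m\ge0$. Next, the shift argument yields $J^\star_{k+1}\le J^\star_k-\|x(k+1)\|^2-\|u(k)\|^2$ at every step interior to a block; telescoping over the block and discarding the nonnegative remainder of the last subproblem bounds the cost actually incurred over that $N$-block by the optimal value at its start, which by~(\ref{boundedcost}) applied to the feasible input above is at most $\sigma\|x(\ell N)\|^2$ --- plus, at each horizon reset, the cost of an $N$-step feasible extension over the next ITEC-free block, itself a geometrically small multiple of $\sigma\|x(0)\|^2$ by~(\ref{boundedcost}) and the decay just obtained. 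Summing over all blocks,
\begin{equation}
\sum_{j=1}^{\infty}\|x(j)\|^2+\|u(j-1)\|^2\;\le\;\sum_{\ell\ge0}\sigma\|x(\ell N)\|^2\;\le\;\sigma\|x(0)\|^2\Big(\sum_{m\ge0}\beta^{m}+\sum_{m\ge0}\beta^{m+1}\Big)=\frac{1+\beta}{1-\beta}\,\sigma\|x(0)\|^2,
\end{equation}
which is the asserted estimate. Finiteness of this sum forces $\|x(j)\|\to0$, while each $\|x(j)\|^2$ is bounded by $\sigma\|x(0)\|^2$ (from~(\ref{boundedcost}) and the decay of the block-boundary states, after shrinking $B_N$ if necessary), giving Lyapunov stability; hence asymptotic stability.

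\textbf{Main obstacle.} The delicate point is the cost accounting at the horizon resets: the clean ``$J^\star$ decreases by one stage cost'' telescoping is valid only \emph{within} a block, and at a reset the propagated tail is one $N$-block shorter than the enlarged horizon, so one must show the required extension can always be chosen admissibly --- with the correct residual energy budget and the correct contractive target --- without spoiling the geometric estimates, and that these extension costs, summed with $J^\star_0$, reproduce exactly the constant $\frac{1+\beta}{1-\beta}\,\sigma$. Coupling the receding-and-resetting horizon, the running budget $\gamma$, and the chain of contractive constraints in this way is precisely the role of the lemma proved in the appendix.
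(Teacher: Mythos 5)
Your overall route (recursive feasibility by splicing elements of $\mathcal{U}_N(\cdot,\sigma,\beta)$ and $\mathcal{U}^C_N(\cdot,\sigma)$, geometric decay of block-boundary states, per-step value-function telescoping) is the classical MPC argument and is organized quite differently from the paper, which never telescopes $J^\star_k$ step by step: it instead constructs the auxiliary control sequences $v_q$ (past closed-loop inputs, the currently planned $2N$-segment, then zeros), attaches to them the quantity $\Gamma_q$ in (\ref{gamma}) consisting of the realized/planned cost plus the surrogate tail $\sum_{j\geq q+1}\beta^{\lceil j/2\rceil}\sigma\|x(0)\|^2$, proves $\Gamma_{q+1}\leq\Gamma_q$ together with $\Gamma_1\leq\frac{1+\beta}{1-\beta}\sigma\|x(0)\|^2$ (Lemma \ref{lemma}), and then passes to the limit using $v_q\to u^*$.

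The genuine gap sits exactly at the step you flag but do not resolve: the displayed inequality bounding the total cost by $\sum_{\ell\geq0}\sigma\|x(\ell N)\|^2$ is not justified. Telescoping inside a block only bounds the realized cost of block $\ell$ by the optimal value of the $2N$-window problem solved at time $\ell N$, and the spliced feasible candidate bounds that value by roughly $(1+\beta)\,\sigma\|x(\ell N)\|^2$ (it covers two $N$-blocks), not by $\sigma\|x(\ell N)\|^2$; summing these window values naively gives a larger constant (about $\frac{1+3\beta}{1-\beta}\,\sigma\|x(0)\|^2$), not $\frac{1+\beta}{1-\beta}\,\sigma\|x(0)\|^2$. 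To recover the stated constant you must credit the unused $N$-step tail of each window's plan against the next window's cost: at each reset, bound the new $2N$-window optimal cost by the cost of the surviving tail of the old plan plus one fresh $N$-step extension whose cost is at most $\beta^{\lceil(q+1)/2\rceil}\sigma\|x(0)\|^2$, and telescope across resets. That cross-reset bookkeeping is precisely what the paper's inequality $\Gamma^2_{q+1}+\Gamma^3_{q+1}\leq\Gamma^3_q+\Gamma^4_q$ and the monotonicity $\Gamma_{q+1}\leq\Gamma_q$ accomplish, and deferring it to an unproved lemma leaves the crux of the theorem open. A smaller, shared caveat: your decay $\|x(2mN)\|^2\leq\beta^m\|x(0)\|^2$ at the ends of ITEC-active blocks invokes (\ref{itecbound}), which Algorithm \ref{algo1} does not impose as an online constraint (only the contraction $\|x(k+h_c)\|^2\leq\beta^{i+1}\|x(0)\|^2$ is enforced); the paper's own argument leans on the same property, so you should either state it as an explicit assumption on the optimizer or add (\ref{itecbound}) to the online problem, which nonemptiness of $\mathcal{U}^C_N$ keeps feasible.
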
\

We will prove this result by constructing a sequence of control sequences $v_q(k)$ indexed by $q$  that converges to the control sequence obtained by Algorithm \ref{algo1} and a bounded sequence $\Gamma_q$, associated with $v_q(k)$. To proceed, we need to define $v_q(k)$ and $\Gamma_q$:
\begin{mydef} $v_q(k)$ is a $q$-indexed sequence of control sequences defined by:
\begin{equation}\label{defv}
v_q(k) = \left\{ \begin{array}{ll} v_{q-1}(k) & 0\leq k< (q-1)N;\\
u_{qN}^{*(q+2)N-1} & (q-1)N \leq k< (q+1)N;\\
0 & k\geq (q+1)N\end{array} \right. 
\end{equation}
\noindent where $u_{k_1}^{*k_2}$ is the control segment obtained by Algorithm \ref{algo1} for $k_1 \leq k \leq k_2$, and $v_1(k)$ is given by:
\begin{equation}
v_1(k) = \left\{ \begin{array}{ll} u_{0}^{*2N-1} & 0\leq k< 2N;\\
0 & k \geq 2N \end{array} \right. 
\end{equation}
\end{mydef}\

In other words, $v_q(k)$ is indexed with each time the horizon is pushed $N$ steps ahead, and it consists of the past control inputs and the current $2N$ control segment, followed by zeros.

\begin{mydef} $\Gamma_q$ is defined as:
\begin{equation}\label{gamma}
\Gamma_q = \sum_{j=1}^{(q+1)N}||x(j)||^2+||v_q(j-1)||^2+\sum_{j=q+1}^{\infty}\beta^{\lceil \frac{j}{2} \rceil}\sigma||x(0)||^2
\end{equation}
\noindent where $v_q$ is as in (\ref{defv}) and $x(j)$ is generated by $v_q(j-1)$ and the dynamics (\ref{dynamics}), and $\lceil \cdot \rceil$ is the ceiling function.
\end{mydef}

Finally, the following lemma will be crucial to prove the asymptotic stability of the system under Algorithm \ref{algo1}:

\begin{lem}\label{lemma}If the system is $\beta$-stabilizable with horizon $N$, then:
\begin{equation}\label{lem1}
\Gamma_q \leq \frac{1+\beta}{1-\beta} \sigma||x(0)||^2, ~~\forall q\geq 1
\end{equation}
\noindent where $\Gamma_q$ is as in (\ref{gamma}).
\end{lem}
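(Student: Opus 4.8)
The plan is to establish the bound in \eqref{lem1} by induction on $q$, showing that the sequence $\Gamma_q$ is in fact non-increasing, so that $\Gamma_q \le \Gamma_1$ for all $q\ge 1$, and then checking that $\Gamma_1 \le \frac{1+\beta}{1-\beta}\sigma\|x(0)\|^2$ directly. For the base case, $v_1$ consists of the first $2N$ control inputs produced by Algorithm \ref{algo1} followed by zeros. The first $N$-block is the ITEC-active portion, governed by a control in $\mathcal{U}^C_N$, and the second $N$-block is governed by a control in $\mathcal{U}_N(\cdot,\sigma,\beta)$; invoking \eqref{boundedcost} on each block (with the contraction \eqref{itecbound} on the first block to control the state entering the second block) bounds the finite sum $\sum_{j=1}^{2N}\|x(j)\|^2+\|v_1(j-1)\|^2$ by roughly $\sigma\|x(0)\|^2 + \sigma\|x(N)\|^2 \le 2\sigma\|x(0)\|^2$, while the tail term $\sum_{j=2}^\infty \beta^{\lceil j/2\rceil}\sigma\|x(0)\|^2$ is a geometric-type series summing to something comparable; the two pieces together must be shown to fit under $\frac{1+\beta}{1-\beta}\sigma\|x(0)\|^2$.

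For the inductive step, I would compare $\Gamma_{q+1}$ with $\Gamma_q$. Passing from $q$ to $q+1$, the control sequence $v_{q+1}$ agrees with $v_q$ on $[0,qN)$; on $[qN,(q+2)N)$ it is replaced by the fresh optimal $2N$-segment computed by the algorithm at step $k=qN$ (which satisfies the contractive terminal constraint $\|x(k+h_c)\|^2 \le \beta^{i+1}\|x(0)\|^2$ embedded in the optimization), and beyond $(q+2)N$ it is zero. The key observation is that the algorithm's optimal segment can only do better than the feasible candidate obtained by concatenating a $\beta$-stabilizing block and a $\mathcal{U}^C_N$ block furnished by $\beta$-stabilizability; applying \eqref{boundedcost} to those blocks shows the incremental cost of the new $2N$-segment plus the new state-penalty terms is at most $\sigma\|x(qN)\|^2(1+\text{something})$, which is dominated by the corresponding tail terms $\beta^{\lceil\cdot/2\rceil}\sigma\|x(0)\|^2$ that are being removed from the sum $\sum_{j=q+1}^\infty$ when the index shifts to $\sum_{j=q+2}^\infty$. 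Making the bookkeeping precise — tracking that the state norm entering block $qN$ is itself bounded by $\beta^{\lceil q/2\rceil}\|x(0)\|^2$ (which is where the contractive constraint and the fact that the horizon recedes by $N$ every $N$ steps are used), and that the removed tail term exactly covers the added finite cost — is what makes $\Gamma_{q+1}\le\Gamma_q$.

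The main obstacle I expect is the indexing and the interleaving of ITEC-active and ITEC-inactive blocks: the contraction factor $\beta$ only applies on alternate $N$-blocks (the $\mathcal{U}_N$ blocks), while the ITEC blocks give only $\|x(k+N)\|^2\le\|x(k)\|^2$, so the state norm at the start of the $q$-th block is bounded by $\beta^{\lceil q/2\rceil}\|x(0)\|^2$ rather than $\beta^q\|x(0)\|^2$ — which is precisely why the ceiling function appears in \eqref{gamma}. Getting the exponent exactly right in every sub-case of the induction (depending on the parity of $q$, i.e., whether the newly appended block is a contracting or a non-growing one), and verifying that the geometric tail $\sum \beta^{\lceil j/2\rceil}$ telescopes against the added costs with the constant $\frac{1+\beta}{1-\beta}$ and not something larger, will require careful case analysis. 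I would also need the feasibility of the algorithm's optimization at each step $qN$, i.e., that $x(qN)\in B_N$, which should follow inductively from the contraction bound keeping all iterates inside $B_N$ once $x(0)\in B_N$. I suspect the paper relegates part of this — perhaps the base-case geometric estimate or the feasibility propagation — to the appendix lemma referenced in the introduction.
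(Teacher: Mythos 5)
Your overall strategy --- show that $\Gamma_{q+1}\le\Gamma_q$ and then bound $\Gamma_1$ directly --- is exactly the route the paper takes, but both halves of your sketch have quantitative gaps that prevent you from reaching the constant $\tfrac{1+\beta}{1-\beta}$. First, you have the block ordering reversed in the base case: by constraint (\ref{c3}) the ITEC is active on $[(2p-1)N,\,2pN-1]$, i.e.\ on $[N,2N-1]$ within the initial horizon (Algorithm \ref{algo1} initializes $p=N$), so the \emph{first} $N$-block of $v_1$ is the contractive $\mathcal{U}_N(x(0),\sigma,\beta)$ block and the \emph{second} is the $\mathcal{U}^C_N$ block. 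The correct estimate is therefore $\sum_{j=1}^{2N}\|x(j)\|^2+\|v_1(j-1)\|^2\le \sigma\|x(0)\|^2+\sigma\|x(N)\|^2\le(1+\beta)\sigma\|x(0)\|^2$, and adding the tail $\bigl(\beta+2\sum_{j\ge 2}\beta^{j}\bigr)\sigma\|x(0)\|^2$ gives exactly $\tfrac{1+\beta}{1-\beta}\sigma\|x(0)\|^2$. With your ordering the ITEC block only guarantees non-growth, so your bound of (roughly) $2\sigma\|x(0)\|^2$ plus the same tail equals $\tfrac{2-\beta+\beta^2}{1-\beta}\sigma\|x(0)\|^2$, which overshoots the claimed constant by $(1-\beta)\sigma\|x(0)\|^2$; the base case as you set it up does not close.

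Second, in the inductive step only the single tail term $\beta^{\lceil (q+1)/2\rceil}\sigma\|x(0)\|^2$ (the $j=q+1$ term) is retired in passing from $\Gamma_q$ to $\Gamma_{q+1}$, not several, and your comparison candidate --- two fresh blocks furnished by $\beta$-stabilizability at $x(qN)$, with cost up to $2\sigma\|x(qN)\|^2$ --- cannot be absorbed by it: using $\|x(qN)\|^2\le\beta^{\lceil q/2\rceil}\|x(0)\|^2$ you would need $2\beta^{\lceil q/2\rceil}\le\beta^{\lceil (q+1)/2\rceil}$, which fails for both parities of $q$. The paper's bookkeeping, via the decomposition (\ref{g1})--(\ref{g5}), avoids this: the feasible candidate for the new $2N$-segment keeps the second half of the \emph{previous} optimal segment on $[qN,(q+1)N)$ --- whose cost is $\Gamma^3_q$ and is already counted in $\Gamma_q$ --- and appends only one $\beta$-stabilizability block starting from $x((q+1)N)$, whose cost is at most $\sigma\|x((q+1)N)\|^2\le\beta^{\lceil (q+1)/2\rceil}\sigma\|x(0)\|^2=\Gamma^4_q$, i.e.\ exactly the one tail term being removed. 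Optimality of the algorithm's new segment then yields $\Gamma^2_{q+1}+\Gamma^3_{q+1}\le\Gamma^3_q+\Gamma^4_q$ (inequality (\ref{add3})), and $\Gamma_{q+1}\le\Gamma_q$ follows since $\Gamma^1_{q+1}=\Gamma^1_q+\Gamma^2_q$ and $\Gamma^4_{q+1}+\Gamma^5_{q+1}=\Gamma^5_q$. So your monotonicity idea is right, but you need this sharper candidate, together with the correct ITEC/contraction ordering (which is also what makes the ceiling exponent in $\|x(qN)\|^2\le\beta^{\lceil q/2\rceil}\|x(0)\|^2$ valid rather than a floor), for the telescoping to go through.
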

\begin{proof} [Proof of Lemma \ref{lemma}] See Appendix.
\end{proof}

\begin{proof}[Proof of Theorem \ref{thm1}] Let $G$ be defined as the infinite horizon cost incurred by the system under Algorithm \ref{algo1}, i.e.:

\begin{equation}
G \equiv \sum_{k=0}^{\infty} ||x(k+1)||^2+||u^*(k)||^2.\\
\end{equation}
\noindent and note that we can write:
\begin{align}
||u^*(k)||^2 &= ||u^*(k)-v_q(k)+v_q(k)||^2\\
&\leq  ||v_q(k)||^2+||u^*(k)-v_q(k)||^2;
\end{align}
\noindent and so, G is bounded by:
\begin{equation}
G \leq \lim_{q\to\infty}\sum_{k=0}^{\infty} ||x(k+1)||^2+ ||v_q(k)||^2+||u^*(k)-v_q(k)||^2
\end{equation}
Using (\ref{gamma}), the previous inequality can be written as:
\begin{align}
G &\leq \lim_{q\to\infty} \Gamma_q +\lim_{q\to\infty}\sum_{k=0}^{\infty}||u^*(k)-v_q(k)||^2\\
&= \lim_{q\to\infty} \Gamma_q \label{conv} \\
&\leq \frac{1+\beta}{1-\beta} \sigma||x(0)||^2\label{finalstep}
\end{align}
\noindent where (\ref{conv}) comes from the fact that the sequence $v_q(k)$ converges to sequence $u^*(k)$, and (\ref{finalstep}) comes from (\ref{lem1}). This concludes the proof.
\end{proof}\

Note that the IRHC algorithm can also be applied to system without ITEC, as described in the following corollary:
\begin{cor}If the non-ITEC system $\phi$ is $\beta$-stabilizable  with horizon $N$, then the IRHC strategy described in Algorithm \ref{algo1} (modified to add contraction at every $N$ steps) will drive it asymptotically stable for any initial condition in $x(0) \in B_N$. Furthermore, the resulting dynamics will satisfy:
\begin{equation}
\sum_{j=1}^{\infty} ||x(j)||^2+||u(j-1)||^2 \leq \frac{\sigma}{1-\beta} ||x(0)||^2
\end{equation}
\end{cor}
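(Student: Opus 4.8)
The plan is to follow the proof of Theorem~\ref{thm1} almost verbatim, altering only the bookkeeping that reflects the absence of the ITEC. In the non-ITEC version of Algorithm~\ref{algo1}, the contraction index $i$ is incremented \emph{every} time the horizon recedes $N$ steps (rather than every other time), so the terminal constraint $\|x(k+h_c)\|^2 \le \beta^{i+1}\|x(0)\|^2$ forces $\|x(qN)\|^2 \le \beta^{q}\|x(0)\|^2$ for every $q\ge 1$. I would keep the approximating sequences $v_q(k)$ of Definition~(\ref{defv}) --- past controls, the freshly re-optimized $2N$-segment, then zeros --- and redefine the companion quantity $\Gamma_q$ exactly as in~(\ref{gamma}) but with the tail weight $\beta^{\lceil j/2\rceil}$ replaced by $\beta^{j}$, so that the $j$-th tail term matches the contraction level $\beta^{j}$ of the corresponding $N$-block's initial state.

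The heart of the argument is the analogue of Lemma~\ref{lemma}: $\Gamma_q \le \frac{\sigma}{1-\beta}\|x(0)\|^2$ for all $q\ge 1$, proved by induction on $q$ along the lines of the appendix proof of Lemma~\ref{lemma}. The mechanism is that $\beta$-stabilizability supplies, at each re-optimization from a state $x(qN)$, an admissible length-$N$ control in $\mathcal U_N(x(qN),\sigma,\beta)$ whose running cost is at most $\sigma\|x(qN)\|^2 \le \sigma\beta^{q}\|x(0)\|^2$ and which again contracts by $\beta$; since the contracted states remain in $B_N$, such a sequence always exists, and concatenating these blocks yields a feasible competitor for each finite optimization. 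Hence the optimal cost accumulated over the $j$-th $N$-block is dominated by the geometric term $\sigma\beta^{j}\|x(0)\|^2$. The base case $q=1$ then gives $\Gamma_1 \le \sigma\|x(0)\|^2\bigl(1+\beta+\tfrac{\beta^{2}}{1-\beta}\bigr) = \tfrac{\sigma}{1-\beta}\|x(0)\|^2$, and the inductive step shows $\Gamma_q$ is nonincreasing in $q$ because each freshly appended block contributes no more than the tail term it replaces. The one structural difference from Theorem~\ref{thm1} is that here \emph{every} $N$-block contracts (there is no ``coasting'' ITEC block that uses only $\mathcal U^C_N$ and merely preserves the state norm), so the geometric series of block costs collapses to $\tfrac{\sigma}{1-\beta}\|x(0)\|^2$ rather than to $\tfrac{1+\beta}{1-\beta}\sigma\|x(0)\|^2$.

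With this lemma available, the proof closes exactly as that of Theorem~\ref{thm1}: writing $G$ for the closed-loop infinite-horizon cost and $u^*$ for the control produced by the algorithm, the triangle inequality together with~(\ref{gamma}) gives $G \le \lim_{q\to\infty}\Gamma_q + \lim_{q\to\infty}\sum_{k}\|u^*(k)-v_q(k)\|^2$; the second limit vanishes because $v_q(k)\to u^*(k)$, so $G \le \tfrac{\sigma}{1-\beta}\|x(0)\|^2$, and summability of $\sum_j\|x(j)\|^2$ forces $x(j)\to 0$, i.e.\ asymptotic stability.

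I expect the main obstacle to be, as for Theorem~\ref{thm1}, the inductive step of the lemma. Two facts must be checked carefully: that the contraction levels compound cleanly to $\beta^{q}$, which needs each contracted state to lie in $B_N$ so that a new $\beta$-stabilizing block is guaranteed to exist; and that the cost of each re-optimized/appended block is dominated by precisely the tail term $\sigma\beta^{j}\|x(0)\|^2$ removed in passing from $\Gamma_{q-1}$ to $\Gamma_{q}$, which is what makes $\{\Gamma_q\}$ monotone and hence uniformly bounded. A lesser technical point is justifying the interchange of the limit in $q$ with the infinite sum over $k$ and the convergence $v_q\to u^*$; both are controlled by the uniform bound $\Gamma_q\le\tfrac{\sigma}{1-\beta}\|x(0)\|^2$, which provides a summable dominating function.
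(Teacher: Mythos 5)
Your proposal is correct and follows essentially the same route the paper intends: the paper simply states that ``the proof is similar to that of Theorem \ref{thm1},'' and your adaptation (contraction every $N$ steps, $\Gamma_q$ with tail weights $\beta^{j}$ in place of $\beta^{\lceil j/2\rceil}$, the modified lemma giving $\Gamma_q\le\tfrac{\sigma}{1-\beta}\|x(0)\|^2$, and the same closing limit argument) is exactly that adaptation, with the arithmetic $1+\beta+\tfrac{\beta^{2}}{1-\beta}=\tfrac{1}{1-\beta}$ checking out.
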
\

The proof is similar to that of Theorem \ref{thm1}.

It is important to mention that the bound on the infinite horizon cost is only used to prove stability of the system and no claims on tightness are made. 

\section{SIMULATION}\label{secsim}
In this section, we employ the proposed algorithm to stabilize the two-dimensional nonlinear oscillator previously discussed in \cite{Primbs:1999p697}. The dynamics of the system are given by:

\begin{align*}
\dot{x}_1 = & x_2\\
\dot{x}_2 = &-x_1\Big(\frac{\pi}{2} +\tan^{-1}(5x_1)\Big)\\
& ~~~~~~~~-\frac{5x_1^2}{2(1+25x_1^2)}+4x_2+3u
\end{align*}

We used Matlab to implement the proposed IRHC algorithm to the discretized version of the system with sampling period $0.05$ and ITEC parameters $C=4.8$, $N=4$, initial state $x(0)=[2; -1]$ and $\beta=0.8$. The $\beta$-stabilizability (with $\beta=0.8$) was determined empirically. Figs. \ref{fig_x} and \ref{fig_u} show the evolution of the norms of the state and of the input, respectively, at different time instances. The solid lines (marked with x's) represent past signals while the dashed lines (marked with circles) represent the future prediction. Note that the ITEC constraints are satisfied and that the prediction horizon encompasses the constraint intervals in their entirety.  Fig. \ref{fig_x1x2} shows the plot of  $x_1$ vs. $ x_2$. Though the trajectory of the states have a oscillatory nature, the algorithm stabilizes the system.
\begin{figure}[tb]
      \centering
      \includegraphics[width=9.3cm]{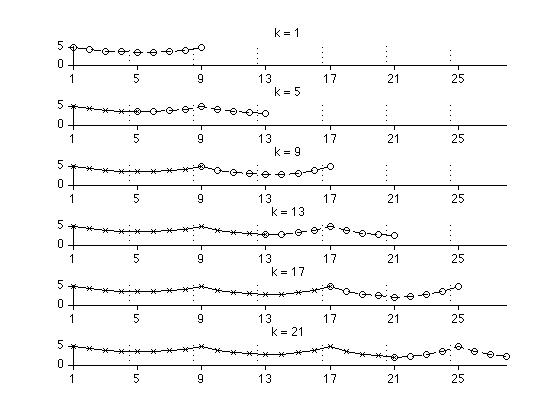}
      \caption{Evolution of the state norm for the ITEC simulation.}
      \label{fig_x}
          \includegraphics[width=9.3cm]{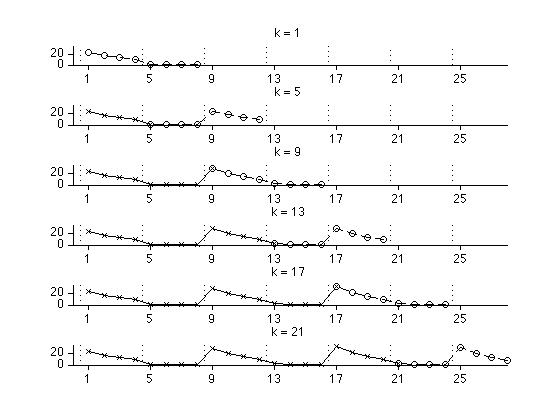}
           \caption{Evolution of the input norm for the ITEC simulation.}
                \label{fig_u}
\end{figure}

\begin{figure}[tb]
      \centering
      \includegraphics[width=9.3cm]{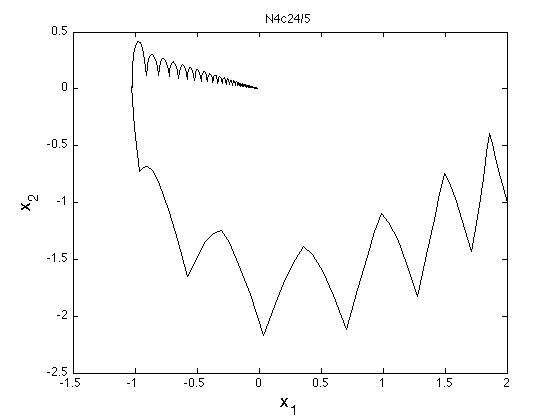}
      \caption{ITEC simulation: $x_1$ vs $x_2$}
      \label{fig_x1x2}
\end{figure}

Since this example was used in \cite{Primbs:1999p697} to showcase the dangers of the direct application of a receding horizon strategy, we made further simulations where the ITEC is not imposed. An interesting characteristic of this system (and the reason it was discussed in \cite{Primbs:1999p697}) is that the direct application of a RHC strategy works well for certain horizons, but is unstable for others. We ran simulations with $N=4$ and $N=5$ for the same initial condition as before. The traditional RHC used a fixed horizon of $2N$ and no endpoint constraints, while the proposed IRHC algorithm used contractive constraint parameter $\beta = 0.8$ and $\beta = 0.2$. We also compare the receding horizon strategies to the one obtained by $u=-3x_2$ (see \cite{Primbs:1999p697}). As we can see from Figs. \ref{ex4_1} and \ref{ex4_2}, the RHC performs very well for $N=4$, but the system becomes unstable when the horizon is increased to $N=5$. The proposed IRHC strategy with $N=4$ displays a similar oscillatory behavior as before, but incurs a lower cost. Table \ref{table1} gives the simulated infinite horizon costs for all cases. Note that the system with the constraint $\beta=0.8$ yields a lower cost than the system with the tighter constraint $\beta=0.2$, but convergence is faster. Moreover, it is interesting to note that when the horizon parameter is increased to $N=5$, the total cost for $\beta = 0.8$ increases while the cost for $\beta=0.2$ decreases. 

\begin{figure}[tb]
      \centering
      \includegraphics[width=9.3cm]{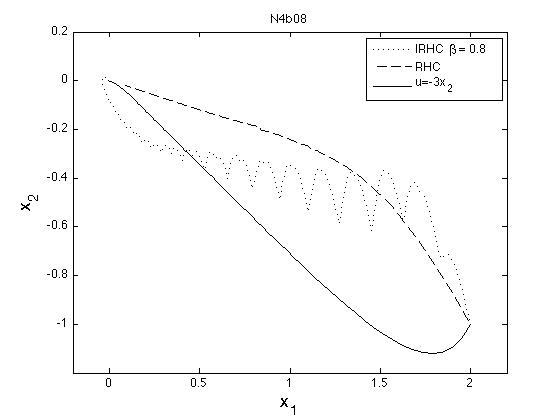}
      \caption{Non-ITEC simulation with parameter $N=4$, IRHC with $\beta =0.8$. Here the traditional RHC is stables and performs well}
      \label{ex4_1}
          \includegraphics[width=9.3cm]{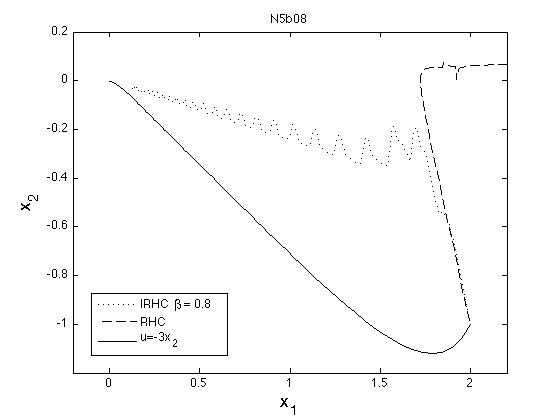}
           \caption{Non-ITEC simulation with parameter $N=5$, traditional RHC is unstable but IRHC $\beta =0.8$ with  still stabilizes the system}
           \label{ex4_2}
\end{figure}

\begin{table}[tb]
\caption{Performance Comparison }
\label{table1}
\begin{center}
\begin{tabular}{|r|c|c|}
\hline
& $N=4$ & $N=5$\\
\hline
$u=-3x_2$ &   		         \multicolumn{2}{|c|}{$325.4$} \\
\hline
RHC & 				$437.2$ & Unstable\\
\hline
IRHC($\beta = 0.8$) &      $412.9$ & 539.4\\
\hline
IRHC($\beta = 0.2$) & 	$3947$ &2053 \\
\hline
\end{tabular}
\end{center}
\end{table}

\section{CONCLUSION}\label{seccon}
We proposed a modified receding horizon strategy that readily handles total energy constraints that are only imposed at certain pre-specified periodic time intervals. Our approach utilizes an interval-wise receding horizon for the online optimization problem and contractive constraints to guarantee boundedness of the infinite horizon cost. The new algorithm is demonstrated by an example and compared to the direct implementation of the traditional step-wise receding horizon when no constraints are imposed. Future work include the use of imperfect state information (in a stochastic setting) and the application of this strategy to decentralized control problems. 

\appendix{}\label{secapp}
\begin{proof}[Proof of Lemma \ref{lemma}]
We will show that:
\begin{equation}
\Gamma_q \leq \frac{1+\beta}{1-\beta} \sigma||x(0)||^2, ~~\forall q\geq 1,
\end{equation}
\noindent by first showing that $\Gamma_{q+1}\leq\Gamma_q$, for all $q\geq 1$. Let 

\begin{equation}
\Gamma_q = \Gamma^1_q+\Gamma^2_q+\Gamma^3_q+\Gamma^4_q+\Gamma^5_q
\end{equation}
\noindent where:
\begin{align}
\Gamma^1_q &= \sum_{j=1}^{(q-1)N}||x(j)||^2+||v_q(j-1)||^2 \label{g1}\\
\Gamma^2_q &= \sum_{j=(q-1)N+1}^{qN}||x(j)||^2+||v_q(j-1)||^2\label{g2}\\
\Gamma^3_q &= \sum_{j=qN+1}^{(q+1)N}||x(j)||^2+||v_q(j-1)||^2\label{g3}\\
\Gamma^4_q & = \beta^{\lceil \frac{q+1}{2} \rceil}\sigma||x(0)||^2\label{g4}\\
\Gamma^5_q & = \sum_{j=q+2}^{\infty}\beta^{\lceil \frac{j}{2} \rceil}\sigma||x(0)||^2\label{g5}
\end{align}
From the definition of $v_q(k)$, we have that:
\begin{equation}\label{add1}
\Gamma_{q+1}^1=\Gamma_q^1+\Gamma^2_q.
\end{equation}

We also note that
\begin{equation}\label{add3}
\Gamma_{q+1}^2+\Gamma_{q+1}^3\leq \Gamma_{q}^3+ \Gamma_{q}^4
\end{equation}
\noindent from the definition of $v_q$ and the assumption that the system is $\beta$-stabilizable.

Moreover, from simple inspection of (\ref{g4}) and (\ref{g5}), we have that
\begin{align}
\Gamma^4_{q+1}+\Gamma^5_{q+1} &= \sum_{j=q+2}^{\infty}\beta^{\lceil \frac{j}{2} \rceil}\sigma||x(0)||^2\\
&= \Gamma^5_q.\label{add2}
\end{align}

Finally, using (\ref{add1}), (\ref{add3}), (\ref{add2}), we conclude that
\begin{align}
\Gamma_{q+1}&\leq\Gamma_q\\
		         &\leq \Gamma_1\\
		         &= \sum_{j=1}^{2N}||x(j)||^2+||u(j-1)||^2+\sum_{j=2}^{\infty}\beta^{\lceil \frac{j}{2} \rceil}\sigma||x(0)||^2\\		                  &\leq \sigma||x(0)||^2+\beta\sigma||x(0)||^2+\sum_{j=2}^{\infty}\beta^{\lceil \frac{j}{2} \rceil}\sigma||x(0)||^2\label{applybeta}\\
		         & =  \Big\{1+2\beta+2\sum_{j=2}^{\infty}\beta^{j} \Big\}\sigma ||x(0)||^2\\
		         &= \frac{1+\beta}{1-\beta} \sigma||x(0)||^2
\end{align}
\noindent where (\ref{applybeta}) follows from the definition of $\beta$-stabilizability.
\end{proof}

\bibliography{bibfile3}{}
\bibliographystyle{unsrt}

\end{document}